\def\BibTeX{{\rm B\kern-.05em{\sc i\kern-.025em b}\kern-.08em
    T\kern-.1667em\lower.7ex\hbox{E}\kern-.125emX}}
\newcommand{\executeiffilenewer}[3]{%
	\ifnum\pdfstrcmp{\pdffilemoddate{#1}}%
	{\pdffilemoddate{#2}}>0%
	{\immediate\write18{#3}}\fi%
}
\newcommand{%
	\executeiffilenewer{.svg}{.pdf}%
	{inkscape -z -D --file=.svg %
		--export-pdf=.pdf --export-latex}%
	\input{.pdf_tex}%
}[1]{%
	\executeiffilenewer{#1.svg}{#1.pdf}%
	{inkscape -z -D --file=#1.svg %
		--export-pdf=#1.pdf --export-latex}%
	\input{#1.pdf_tex}%
}
\newcommand{\bmm}{\begin{matrix}}
	\newcommand{\emm}{\end{matrix}}
\newcommand{\bpm}{\begin{pmatrix}}
	\newcommand{\epm}{\end{pmatrix}}
\newcommand{\bsbm}{\left[\begin{smallmatrix}}
	\newcommand{\esbm}{\end{smallmatrix}\right]}
\newcommand{\bbm}{\begin{bmatrix}}
	\newcommand{\ebm}{\end{bmatrix}}
\theoremstyle{definition}
\newtheorem{theorem}{Theorem}
\newcommand{\midarrow}{\tikz \draw[-triangle 60] (0,0) -- +(-0.1,0);}
\newcommand{\midrarrow}{\tikz \draw[-triangle 60] (0,0) -- +(0.1,0);}
\newcommand{\miduarrow}{\tikz \draw[-triangle 60] (0,0) -- +(0,-0.1);}
\begin{document}

\title{Flexible IR-HARQ Scheme for Polar-Coded Modulation}

\author{
	\IEEEauthorblockN{Peihong Yuan, Fabian Steiner, Tobias Prinz}
	\IEEEauthorblockA{\textit{Institute for Communications Engineering} \\
	\textit{Technical University of Munich}\\
	Munich, Germany \\
	\{peihong.yuan,fabian.steiner,tobias.prinz\}@tum.de}
\and
	\IEEEauthorblockN{Georg B\"ocherer}
	\IEEEauthorblockA{\textit{Mathematical and Algorithmic Sciences Lab} \\
	\textit{Huawei Technologies France}\\
	Paris, France \\
	georg.boecherer@ieee.org}
}

\maketitle

\begin{abstract}
	A flexible incremental redundancy hybrid automated repeat request (IR-HARQ) scheme for polar codes is proposed based on dynamically frozen bits and the quasi-uniform puncturing (QUP) algorithm. The length of each transmission is not restricted to a power of two. It is applicable for the binary input additive white Gaussian noise (biAWGN) channel as well as higher-order modulation. Simulation results show that this scheme has similar performance as directly designed polar codes with QUP and outperforms LTE-turbo and 5G-LDPC codes with IR-HARQ. 
\end{abstract}

\begin{IEEEkeywords}
polar coding, coded modulation, IR-HARQ
\end{IEEEkeywords}

\section{Introduction}
Many communication channels are time-varying and unknown to the transmitter. Incremental redundancy hybrid automated repeat request (IR-HARQ) as shown in Fig.~\ref{fig:harq} is a scheme that transmits additional redundancy bits until the data bits can be reconstructed. For turbo codes (such as those used in LTE), a low rate mother code is punctured with different patterns for several transmissions. The coding scheme for enhanced mobile broadband (eMBB) in 5G uses protograph-based, Raptor-like LDPC codes \cite{chen_protograph-based_2015} that allow for both flexible block length and code rate adaptation. The standard defines two base matrices that offer optimized performance for different operating regimes.

With cyclic redundancy check (CRC) outer codes and successive cancellation list (SCL) decoding \cite{tal2015list}, polar codes \cite{stolte2002rekursive,arikan2009channel} outperform state-of-the-art turbo and LDPC codes in the short to medium length regime. Polar-coded modulation (PCM) is discussed in \cite{seidl2013polar,mahdavifar2016polar,tavildar2017bit}. The performance comparison and efficient code design methods of three polar-coded modulation schemes are presented in \cite[Fig.~11]{bocherer2017efficient}. Multilevel polar coding (MLPC) with set partitioning (SP) labeling in \cite{seidl2013polar} performs best and is around 1 dB more power efficient than an AR4JA \cite{divsalar2009capacity} LDPC code decoded with 200 iterations.

A quasi-uniform puncturing (QUP) algorithm was proposed in \cite{niu2013beyond} to efficiently design length-flexible polar codes, i.e., polar codes where the number of coded bits is not limited to be a power of two. 

In \cite{li2016capacity}, a scheme the authors called \enquote{polar codes with incremental freezing} and in \cite{hong2016capacity} \enquote{parallel-concatenated polar codes} is proposed. The capacity-achievability of this scheme is proved in \cite{li2016capacity} by using the nesting property. In \cite{ma2017incremental}, a polar code extension method is presented which outperforms the scheme in \cite{li2016capacity,hong2016capacity} for finite block length. 

In this work, we prove that the scheme in \cite{ma2017incremental} can achieve capacity asymptotically in the block length under some design constraints. In addition, a length-flexible IR-HARQ scheme based on dynamically frozen bits and QUP is proposed. This scheme is extended to polar-coded modulation with amplitude shift keying (ASK) and quadrature amplitude modulation (QAM) constellations. Simulation results show that the polar codes designed by our algorithm have similar error correction performance as directly designed polar codes.

This work is organized as follows. In Sec.~\ref{sec:pre}, we review polar codes, PCM and QUP. We discuss existing and proposed IR-HARQ schemes in Sec.~\ref{sec:harq}. Sec.~\ref{sec:sim} provides design examples and numerical results. The performance of the proposed scheme is compared with directly designed polar codes and 4G/5G codes \cite{3gpp.36.212,qualcomm_codes} in additive white Gaussian noise (AWGN) channels. We conclude in Sec.~\ref{sec:con}.

\begin{figure}
	\footnotesize
	\centering
	\[
	\renewcommand{\arraystretch}{1.5}
	\begin{array}{|c|c|c|c|}
	\multicolumn{1}{c}{n_1} & \multicolumn{1}{c}{n_2} & \omit & \multicolumn{1}{c}{n_t} \\
	\omit\mathstrut\downbracefill & \omit\mathstrut\downbracefill & \omit & \omit\mathstrut\downbracefill \\[.2em]
	\hline
	\small\text{$1$st Transmission} & \small\text{$2$nd Transmission}  & \small\cdots & \small\text{$t$th Transmission}\\
	\hline
	\omit\mathstrut\upbracefill & \omit & \omit & \omit \\
	\multicolumn{1}{c}{\text{$1$st decoding}} & \omit & \omit & \omit \\
	\omit \span\omit\mathstrut\upbracefill & \omit & \omit \\
	\multicolumn{2}{c}{\text{$2$nd decoding}} & \omit & \omit \\
	\omit \span\omit\span\omit\span\omit\mathstrut\upbracefill \\
	\multicolumn{4}{c}{\text{$t$th decoding}}  \\
	\end{array}
	\]	
	\caption{IR-HARQ.}
	\label{fig:harq}
\end{figure}

\section{Preliminaries}\label{sec:pre}

\subsection{Polar Coding}
In this paper, uppercase letters $X,Y,B,U$ denote the random variables (RV) while the corresponding lowercase letters are their realizations. The notation $c_1^N$ is short for $c_1c_2\dots c_N$. For an arbitrary subset $\mathcal{S}$ of $\left\{1,\dots,N\right\}$, $\mathcal{S}^\complement$ is the complement of $\mathcal{S}$ and $c_\mathcal{S}$ denotes the vector of  $c_1^N$ formed by the elements with indices in $\mathcal{S}$.

A binary polar code of block length $N$ and dimension $k$ is defined by the polar transform with matrix $\mathbb{F}^{\otimes\log_2N}$ and $N-k$ frozen positions, where $\mathbb{F}^{\otimes\log_2N}$ denotes the $\log_2 N$-fold Kronecker power of the kernel 
\begin{equation}
\mathbb{F}= \begin{bmatrix}
1 & 0\\
1 & 1
\end{bmatrix}.
\end{equation}
Polar encoding can be represented by
\begin{equation}\label{eq:polartrans}
c_1^N=u_1^N \mathbb{F}^{\otimes\log_2N}.
\end{equation}
The vector $c_1^N$ denotes the codeword. The vector $u_1^N$ includes $k$ information bits $u_\mathcal{A}$ and $N-k$ predefined frozen bits $u_{\mathcal{A}^\complement}$. $\mathcal{A}$ and $\mathcal{A}^\complement$ are called the information set and frozen set defined in \cite{arikan2009channel}. SC decoding uses the channel observation and previous estimates $\hat{u}_1,\dots, \hat{u}_{i-1}$ to decode $u_i$. Both encoding and SC decoding have complexity $\mathcal{O}\left(N\log_2 N\right)$ \cite{arikan2009channel}.

The polar code construction finds the most reliable bits in $u_1^N$ under SC decoding. The Monte Carlo (MC) construction was introduced in \cite{stolte2002rekursive,arikan2009channel}, and needs extensive simulations. An information theoretical construction was introduced in \cite{arikan2009channel}. The reliability of the $i$th bit can be quantified by the mutual information (MI) ${\rm I}\left(U_i;Y^N|U_1^{i-1}\right)$. We can calculate these MIs for all $i \in \left\{1,2,\dots, N\right\}$ by recursively calculating the MIs of the basic polar transform displayed in Fig.~\ref{fig:basic_polartrans}. 

For the binary input additive white Gaussian noise (biAWGN) channel, density evolution \cite{mori2009performance} with Gaussian approximation (GA) \cite{ten2004design} has much lower complexity and performs very close to the MC construction. The update rule for the basic polar transform is given by
\begin{align}
I^-&=1-J \left( \sqrt[]{ \left[J^{-1}(1-I_1)\right]^2 + \left[J^{-1}(1-I_2)\right]^2 } \right)\\
I^+&=J \left( \sqrt[]{ \left[J^{-1}(I_1)\right]^2 + \left[J^{-1}(I_2)\right]^2 } \right)
\end{align}
The numerical approximations in \cite{brannstrom2005convergence} can be used for $J(\cdot)$ and $J^{-1}(\cdot)$.
Additionally, the frame error rate (FER) under SC decoding can be estimated by 
\begin{equation}
{\rm Pr}\left( \hat{U}_i \neq U_i | \hat{U}_1^{i-1} = U_1^{i-1} \right) = Q\left(\frac{1}{2}J^{-1}\left( {\rm I}\left(U_i;Y^n|U_1^{i-1}\right) \right)\right) 
\end{equation}
\begin{equation}
\text{FER}_\text{SC,est} = 1-\prod_{i \in \mathcal{A}}\left( 1- {\rm Pr}\left( \hat{U}_i \neq U_i | \hat{U}_1^{i-1} = U_1^{i-1} \right) \right)
\end{equation}
where
\begin{equation}
Q(x) = \frac{1}{\sqrt[]{2\pi}}\int_{x}^{\infty}e^{\frac{u^2}{2}}{\rm d}u.
\end{equation}

\begin{figure}
	\centering
	\footnotesize
	\begin{tikzpicture}
	\draw (0,0)--(3,0); 
	\draw (0,1)--(3,1);
	\draw (1.5,0)--(1.5,1);
	\draw[fill=white] (1.5,1) circle [radius=0.2cm];
	\node[color=black]at (1.5,1) {$+$};
	\node at (0,0.25) {$I^+={\rm I}(U_2;Y_1Y_2|U_1)$};
	\node at (0,1.25) {$I^-={\rm I}(U_2;Y_1Y_2)$};
	\node at (3,0.25) {$I_2={\rm I}(B_2;Y_2)$};
	\node at (3,1.25) {$I_1={\rm I}(B_1;Y_1)$};
	\end{tikzpicture}
	\caption{MIs of the basic polar transform}
	\label{fig:basic_polartrans}
\end{figure}
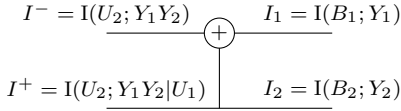

 The nesting property is introduced in \cite{li2016capacity}. The authors show that the reliability order of the polarized bits is independent on the channel quality for infinite length. Note that the nesting property generally does not hold for finite length.

\subsection{Polar-Coded Modulation}
We consider the following memoryless AWGN channel model with $2^m$-ASK constellations $X\in \mathcal{X} = \left\{ \pm 1,\pm 3,\dots,\pm (2^m-1) \right\}$.
\begin{equation}
Y=X+\sigma Z
\end{equation}
where $Z$ is zero mean Gaussian noise with variance one. Note that ASK constellations can be orthogonally extended to QAM constellations. 

Polar-coded modulation (PCM) schemes are discussed in \cite{seidl2013polar,mahdavifar2016polar,tavildar2017bit}. In \cite{bocherer2017efficient}, the performance of three schemes is compared and their efficient design algorithms are presented. MLPC with SP labeling in \cite{seidl2013polar} provides the best performance. MLPC with code length $mN$ works as following: 
\begin{itemize}
	\item Encoding
				\begin{itemize}
					\item[1.] Put $k$ information bits in vector $u_1^{mN}$ and define
						\begin{equation}u_{1,j}^N= {u_{\left(j-1\right)N+1}^{jN}},~j = 1,\dots,m.
						\end{equation}
					\item[2.] Encode $m$ polar codes
						\begin{equation}c_{1,j}^N= u_{1,j}^N \mathbb{F}^{\otimes\log_2N},~j = 1,\dots,m.
						\end{equation}
					\item[3.] Map the code words to symbols for $i = 1,\dots,N$
						\begin{equation}
						\begin{split}
						c_{i,1},c_{i,2},\dots,c_{i,m}&= b_{i,1},b_{i,2},\dots,b_{i,m} \\
						&=(b_1,b_2,\dots,b_m)_i \overset{\text{SP}}{\mapsto} x_i.
						\end{split}
						\end{equation}
			   \end{itemize}
	\item Decoding
				\begin{itemize}
					\item[1.] Demap and decode level 1
					\begin{equation}
					\begin{split}
						\ell_{i,1} &= \log\frac{p_{B_{i,1}|Y_i}(0|y_i)}{p_{B_{i,1}|Y_i}(1|y_i)},~ i = 1,\dots,N \\
						\hat{c}_{1,1}^N &= \texttt{polarSCDecode}\left(l_{1,1},\dots,l_{N,1}\right).
					\end{split}
					\end{equation}
					\item[$j$.] Demap and decode level $j$ for $j = 2,\dots,m$
					\begin{equation}
					\begin{split}
					\ell_{i,j} &= \log\frac{p_{B_{i,j}|Y_i B_{i,1} \dots B_{i,j-1} }\left(0|y_i \hat{c}_{i,1} \dots \hat{c}_{i,j-1}\right)}{p_{B_{i,j}|Y_iB_{i,1} \dots B_{i,j-1}}\left(1|y_i\hat{c}_{i,1} \dots \hat{c}_{i,j-1}\right)},\\&\qquad \qquad \qquad \qquad \qquad \qquad i = 1,\dots,N \\
					\hat{c}_{1,j}^N &= \texttt{polarSCDecode}\left(l_{1,j},\dots,l_{N,j}\right)
					\end{split}
					\end{equation}
				\end{itemize}
\end{itemize}
where $l$ denotes the input log-likelihood ratio (LLR).

MI demapper GA (MI-DGA) construction calculates ${\rm I}\left(B_j;YB_1^{j-1}\right)$ for $j=1,\dots,m$ and uses them to find the most reliable bits in $u_1^{mN}$ with GA. Therefore, the required code rates of the $m$ polar codes $k_1/N,\dots,k_m/N$ are given by MI-DGA. Note that the overall code rate of the system is $\left(k_1+\cdots+k_m\right)/mN = k/mN$ and the transmission rate is $k/N ~ \text{bits/channel use}$.

\subsection{Rate-Matched Polar Codes}
Because of the recursive structure of $\mathbb{F}^{\otimes\log_2N}$, polar codes usually have a block length that is a power of two. Punctured polar codes are introduced in \cite{eslami2011practical, wang2014novel}. The punctured $\left(n,k\right)$ polar codes can be decoded with standard polar decoders for length $N=2^{ \left \lceil{\log_2 n }\right \rceil }$. For punctured polar codes, $N-n$ bits in $c_1^N$ are not transmitted and the corresponding LLRs are set to zero. For GA construction, the initial MIs of the punctured bits are zero. 

With the QUP algorithm, the first $N-n$ bits in $c_1^N$ are not transmitted, i.e.,
\begin{align*}
\text{GA construction:}&~\text{set} ~{\rm I}\left(B_i;Y\right)=0,~i=1,\dots,N-n\\
\text{after encoding:}&~\text{transmit} ~c_{N-n+1}^N\\
\text{before decoding:}&~\text{set} ~l_1^{N-n} = 0.
\end{align*}
This algorithm is called QUP because the punctured position in bit-reversal representation \cite{arikan2009channel} looks like uniformly distributed in $\left\{1,\dots,N\right\}$. Let $\left(n,k,N\right)$ denote a QUP polar code with dimension $k$, block length $n$ and punctured from $N$ bits mother codes with QUP algorithm, where $N$ has to the power of two.

\begin{theorem} \label{th1}
	For an $\left(n,k,N\right)$ QUP-polar code, the first $N-n$ bits in $u_1^N$ are frozen.
\end{theorem}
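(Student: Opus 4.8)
The plan is to prove the stronger statement that the $i$th synthesized bit-channel is completely useless --- ${\rm I}\!\left(U_i;Y^N\mid U_1^{i-1}\right)=0$ --- for every $i\in\{1,\dots,N-n\}$, from which the claim follows because a length-$n$ dimension-$k$ code has $k\le n$, so the construction must freeze at least $N-n$ positions, and the $N-n$ positions of zero reliability are necessarily among them. The key step is a structural observation about the transform in \eqref{eq:polartrans}: since $\mathbb{F}$ is lower triangular with unit diagonal, so is every Kronecker power $\mathbb{F}^{\otimes\log_2 N}$; reading $c_1^N=u_1^N\mathbb{F}^{\otimes\log_2 N}$ coordinate-wise gives $c_j=\sum_{l\ge j}u_l\,\bigl(\mathbb{F}^{\otimes\log_2 N}\bigr)_{l,j}$, so for each $j\ge N-n+1$ the code bit $c_j$ is a function of $u_{N-n+1}^N$ alone. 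As QUP transmits exactly $c_{N-n+1}^N$, the whole transmitted block depends only on the last $n$ input bits.

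Next I would feed this into the probabilistic model. The components of $u_1^N$ are uniform and independent and the noise is independent of them; since $c_{N-n+1}^N$ is a function of $u_{N-n+1}^N$, the pair $\bigl(U_i,U_1^{i-1}\bigr)$ is independent of $c_{N-n+1}^N$ for every $i\le N-n$, hence independent of $Y_{N-n+1}^N$, while the punctured observations $Y_1^{N-n}$ are independent of everything (their initial MI, equivalently their LLR, is set to $0$). This gives the Markov chain $U_i-\bigl(U_1^{i-1},c_{N-n+1}^N\bigr)-Y^N$ together with $U_i\perp\bigl(U_1^{i-1},c_{N-n+1}^N\bigr)$, so ${\rm I}\!\left(U_i;Y^N\mid U_1^{i-1}\right)=0$. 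The GA construction gives the same answer directly: in the basic transform a vanishing input MI forces $I^-=0$ and makes $I^+$ equal to the other input, so the zeros at the first $N-n$ leaves propagate and the first $N-n$ synthesized MIs vanish; as $J^{-1}(0)=0$, the estimated error probability $Q\!\left(\frac{1}{2}J^{-1}(\cdot)\right)$ of these bits is $\frac{1}{2}$.

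The argument is short, so the only steps needing care are at the ends. At the front, one should record precisely that the statement is for the non-bit-reversed transform of \eqref{eq:polartrans} and for puncturing of the \emph{first} $N-n$ coordinates of $c_1^N$ --- it is exactly this alignment that the lower-triangular structure turns into freezing of the \emph{first} $N-n$ coordinates of $u_1^N$. At the back, one should note that $0$ is the minimum possible reliability and that $N-k\ge N-n$ positions must be frozen, so the information set $\mathcal{A}$ is contained in $\{N-n+1,\dots,N\}$ irrespective of any ties among the remaining reliabilities; hence $\{1,\dots,N-n\}\subseteq\mathcal{A}^\complement$. I do not expect a genuine obstacle beyond these bookkeeping points --- the content of the theorem is essentially the triangular structure of $\mathbb{F}^{\otimes\log_2 N}$.
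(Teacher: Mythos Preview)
Your argument is correct and takes a genuinely different route from the paper. The paper works entirely through the recursive MI update of the basic transform: from $I^{-}\le\min\{I_1,I_2\}$ it deduces that $I^{-}=0$ whenever one input MI vanishes and $I^{-}=I^{+}=0$ when both do, so the number of zero-capacity bit-channels is preserved layer by layer; invoking the recursive structure of $\mathbb{F}^{\otimes\log_2 N}$ it then concludes that the $N-n$ zeros placed by QUP at the first $N-n$ leaves propagate to ${\rm I}\bigl(U_i;Y^N\mid U_1^{i-1}\bigr)=0$ for $i\le N-n$. Your primary argument bypasses the recursion altogether: the lower-triangularity of $\mathbb{F}^{\otimes\log_2 N}$ shows directly that the transmitted block $c_{N-n+1}^{N}$ is a deterministic function of $u_{N-n+1}^{N}$ alone, so the first $N-n$ input bits are information-theoretically disconnected from the received signal. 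This is more elementary and arguably more transparent---the conclusion follows from a single structural fact about the matrix rather than from tracking which synthesized channels inherit the zeros through the recursion (a step the paper leaves somewhat implicit). Your secondary remark about zero propagation in the GA construction is essentially the paper's proof, so you in fact have both arguments. The bookkeeping you flag at the end (that $k\le n$ forces at least $N-n$ frozen positions, and that zero-MI indices are necessarily chosen) is a point the paper takes for granted; making it explicit is appropriate.
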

\begin{proof}
For the MI update of the basic polar transform in Fig.~\ref{fig:basic_polartrans}, it is easy to show that
\begin{align}
I^- \leq \min \{ I_1, I_2 \},~I^+ \geq \max \{ I_1, I_2 \}.
\end{align}
As MI is always non-negative, we have
\begin{align}
I^- =0,~&\text{if}~I_1 = 0~\text{or}~I_2 = 0 \\
I^- =I^+=0,~&\text{if}~I_1 =I_2= 0.
\end{align}
Thus the number of channels with zero capacity is invariant. With QUP we set $I_1^{N-n}$ to zero. Because of the recursive structure of $\mathbb{F}^{\otimes\log_2N}$ the zeros will propagate through the transform which causes ${\rm I}\left(U_i;Y^N|U_1^{i-1}\right)=0$ for $i=1,\dots,N-n$.
\end{proof}

\begin{theorem} \label{th2}
	All $\left(n,k,2^jN\right)$ QUP-polar codes have the same encoding and decoding complexity as $\left(n,k,N\right)$ QUP-polar codes, where $j$ is a natural number and $N=2^{ \left \lceil{\log_2 n }\right \rceil }$.
\end{theorem}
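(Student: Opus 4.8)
The plan is to show that for an $(n,k,2^jN)$ QUP-polar code both the polar transform and the SC decoder collapse onto a single block of length $N$, so that the cost is the same $\mathcal{O}(N\log_2 N)$ as for an $(n,k,N)$ QUP-polar code. The entry point is Theorem~\ref{th1}: its proof shows that, with the QUP initialization, the first $2^jN-n$ entries of $u_1^{2^jN}$ have zero MI and are frozen. Since $N=2^{\lceil\log_2 n\rceil}\ge n$, we have $2^jN-n\ge(2^j-1)N$, so all of $u_1^{(2^j-1)N}$ is frozen to zero and the only coordinates that can carry information lie in the final length-$N$ block $v:=u_{(2^j-1)N+1}^{2^jN}$, whose own first $N-n$ entries are moreover frozen.

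First I would dispose of the encoding claim using the identity $\mathbb{F}^{\otimes\log_2(2^jN)}=\mathbb{F}^{\otimes j}\otimes\mathbb{F}^{\otimes\log_2 N}$. Reading $u_1^{2^jN}$ as a $2^j\times N$ array whose $a$-th row is the $a$-th run of $N$ consecutive entries, only the last row $v$ is nonzero; since the last row of $\mathbb{F}^{\otimes j}$ is the all-ones vector, the codeword $c_1^{2^jN}$ is just $2^j$ consecutive copies of $w:=v\,\mathbb{F}^{\otimes\log_2 N}$. QUP transmits $c_{2^jN-n+1}^{2^jN}$, which is exactly the last $n$ entries of $w$; as $v$ has its first $N-n$ entries zero, this is precisely the codeword emitted by the $(n,k,N)$ QUP encoder on the same information bits. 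Hence encoding costs $\mathcal{O}(N\log_2 N)$, and in fact the two codes agree on the transmitted coordinates.

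For decoding I would induct on $j$ along the SC recursion. At the top level the length-$2^jN$ decoder splits into two length-$2^{j-1}N$ sub-decoders; because $2^{j-1}N\le 2^jN-n$, every channel LLR entering the check $(-)$ sub-decoder is zero and every one of its bits is frozen, so all its $f$-updates return zero and can be skipped and it decides the all-zero vector. Substituting this zero partial sum into the variable $(+)$ branch, each $g$-update merely forwards the corresponding second-half channel LLR; thus the $(+)$ sub-decoder is literally an $(n,k,2^{j-1}N)$ QUP decoding problem on the last $2^{j-1}N$ channels. After $j$ such steps only the $(n,k,N)$ QUP SC decoder remains, and every pruned $(-)$ sub-tree contributes no arithmetic, giving total cost $\mathcal{O}(N\log_2 N)$.

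The step I expect to be delicate, and would write out in full, is this decoding induction: one must verify that at each level the pruned sub-tree is simultaneously all-frozen and fed identically-zero LLRs — which is where $n\le N$ and $j\ge 1$ enter, through $2^{j-1}N\le 2^jN-n$ — and that the $(+)$ branch really does inherit the untouched second-half LLRs. Both facts rest on the elementary identities that the $f$-update of a zero LLR is zero and that the $g$-update with a zero partial sum and a zero LLR reduces to the other LLR. A final remark, that the GA/MI construction respects the same decomposition so the information set of the $(n,k,2^jN)$ code is that of the $(n,k,N)$ code shifted by $(2^j-1)N$, then makes the equality of the two codes explicit.
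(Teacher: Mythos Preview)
Your proposal is correct and follows essentially the same approach as the paper: both arguments use Theorem~\ref{th1} to freeze the upper half, observe that the check-branch SC sub-decoder receives only zero LLRs and outputs zeros, and then note that the variable-branch $g$-update with zero partial sum and zero upper LLRs simply passes through $\ell_{N+1}^{2N}$, reducing to the length-$N$ decoder. The paper only writes out the step $2N\to N$ explicitly and waves at the general $2^jN$ case, whereas you carry out the induction in full and also supply the (straightforward) encoding argument via $\mathbb{F}^{\otimes j}\otimes\mathbb{F}^{\otimes\log_2 N}$, which the paper omits; your write-up is thus a strictly more complete version of the same proof.
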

\begin{proof}
	The decoder for $\left(n,k,2N\right)$ QUP-polar codes is shown in Fig.~\ref{fig:punc_polar}. Obviously, more than $N$ bits are punctured. Thus, $u_1^N$ are all frozen because of Theorem~\ref{th1}. According to the SC decoding, we first decode $u_1^N$ (upper decoder) and then decode $u_{N+1}^{2N}$ (lower decoder) based on $\hat{u}_1^N$. The input of the lower decoder is $(1-0)\cdot\ell_1^N+\ell_{N+1}^{2N}=\ell_{N+1}^{2N}$. Thus for this QUP-polar code, we just need to run the lower decoder (dashed box in Fig.~\ref{fig:punc_polar}) which is the decoder for $\left(n,k,N\right)$ QUP-polar codes. With the same idea we can extend the theorem to mother code length $2^jN$.
	

\end{proof}

\begin{figure}
	\centering
	\footnotesize
	\begin{tikzpicture}
	\draw  (0,0) rectangle (2,2);
	\node at (1,1) {$F^{\otimes \log_2N}$};
	\draw  (0,2.5) rectangle (2,4.5);
	\node at (1,3.5) {$F^{\otimes \log_2N}$};
	\draw (2,1)--node {\midarrow}(5,1); 
	\draw (2,3.5)--node {\midrarrow}(5,3.5);
	\draw (5,1)--node {\miduarrow}(5,3.5);
	\draw (6,1)--node {\midarrow}(5,1);
	\draw (6,3.5)--node {\midarrow}(5,3.5);
	\draw[fill=white] (5,3.5) circle [radius=0.2cm];
	\node[color=black]at (5,3.5) {$+$};
	\node at (6.25,1.5) {$\ell_{N+1}^{2N}$};
	\node at (6.25,4) {$\ell_{1}^{N}=0$};
	\node at (-0.5,1) {$u_{N+1}^{2N}$};
	\node at (-0.6,3.5) {$u_{1}^{N}=0$};
	\node at (3.5,4) {$ 0$};
	\node at (4.5,2.25) {$0$};
	\node at (3.5,1.5) {$\ell_{N+1}^{2N}$};
	\draw [rounded corners, dashed] (-1.2,-0.1) rectangle (4,2.1);
		\draw [rounded corners, dotted] (-1.2,2.4) rectangle (4,4.6);
	\node at (3.5,0.1) {lower};
	\node at (3.5,2.6) {upper};
	\end{tikzpicture}
	\caption{Equivalence of $\left(n,k,N\right)$ and $\left(n,k,2N\right)$ QUP-polar code.}
	\label{fig:punc_polar}
\end{figure}
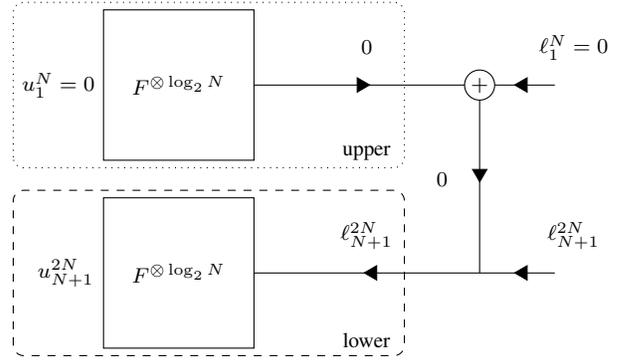

\section{IR-HARQ with Polar Codes}\label{sec:harq}
The basic idea of IR-HARQ is displayed in Fig.~\ref{fig:harq}. The decoder receives $n_t$ bits from the $t$th transmission and then decodes $\left(\sum_{q=1}^{t}n_q,k\right)$ code. Using the CRC, the receiver may detect a decoding failure, in which case it requests the $\left(t+1\right)$th transmission from the sender.

\subsection{Existing Schemes: Polar Codes with Incremental Freezing}
In \cite{li2016capacity,hong2016capacity}, polar codes with incremental freezing are proposed. The main idea is as follows. An $\left(N_1,k\right)$ polar code is transmitted first. When retransmission occurs, we additionally freeze $k^\prime$ \enquote{most unreliable} information bits in the $\left(N_1,k\right)$ code and send them with an $\left(N_2,k^\prime\right)$ code in next transmission. After the $2$nd transmission we decode the $\left(N_2,k^\prime\right)$ code and the first code successively. Note that the first code becomes a $\left(N_1,k-k^\prime\right)$ code with the estimation of the \enquote{most unreliable} information bits. For the $3$rd transmission (if needed), we freeze $k^{\prime\prime}$ \enquote{most unreliable} information bits in both the $\left(N_1,k-k^\prime\right)$ code and the $(N_2,k^\prime)$ code. The retransmissions are continued with the same manner until the decoding is successful. For infinite code length, this scheme achieves capacity \cite{li2016capacity} with the nesting property. We note that this scheme is equivalent to dividing the $\left(\sum_{q=1}^{t}N_q,k\right)$ code into $t$ separated polar codes and decoding them successively, which causes a huge performance loss in the finite length regime.

\subsection{Existing Schemes: Polar Extension}
A polar code extension method is presented in \cite{ma2017incremental}. Consider the $1$st transmission with $\left(N_1,k\right)$ polar code $c(1)_1^{N_1}=u(1)_1^{N_1} \mathbb{F}^{\otimes\log_2{N_1}}$, where $c(t)_1^N$ denotes the codeword after the $t$th transmission. When retransmission occurs, we design a $(2N_1,k)$ polar code $c(2)_1^{2{N_1}}=u(2)_1^{2{N_1}} \mathbb{F}^{\otimes\log_2(2{N_1})}$, let the information set in $u(2)_{{N_1}+1}^{2{N_1}}$ be a subset of the information set in $u(1)_1^{N_1}$ and set $u(2)_{{N_1}+1}^{2{N_1}} = u(1)_1^{N_1}$. Then copy the bits, which are reliable in $u(1)_1^{N_1}$ but unreliable in $u(2)_{{N_1}+1}^{2{N_1}}$ to the most reliable positions in $u(2)_1^{N_1}$. The code word can be presented by 
 \begin{align}
c(2)_{{N_1}+1}^{2{N_1}} &= c(1)_1^{N_1} \\
c(2)_{1}^{{N_1}} &= c(1)_1^{N_1} \oplus u(2)_1^{N_1}\mathbb{F}^{\otimes\log_2{N_1}}.
 \end{align}
Thus, only the first ${N_1}$ bits $c(2)_{1}^{{N_1}}$ need to be transmitted, because the receiver already knows the remaining ${N_1}$ bits of the $\left(2{N_1},k\right)$ polar code from the previous transmission. The receiver first decodes the information bits in $u(2)_1^{N_1}$, copies them to $u(2)_{{N_1}+1}^{2{N_1}}$ as frozen bits and then decodes the remaining bits. The only difference between a directly generated $\left(2{N_1},k\right)$ polar code and the code after the $2$nd transmission is the information set in $u(2)_{{N_1}+1}^{2{N_1}}$ which has to be a subset of the information set in $u(1)_1^{N_1}$. In this scheme, we decode an $\left(N_2,k\right)$ code after the $2$nd transmission ($N_2=2N_1$). This extension is repeated until the decoding is successful. This extension method is not flexible because the transmission length has to be the same as the sum of all previous transmissions, i.e.,
\begin{equation}
N_t = \sum_{q=1}^{t-1}N_q,~t=2,3,\ldots
\end{equation}

This scheme is capacity-achieving under some constraints. According to the information theoretical construction in \cite{arikan2009channel}, designing a $\left(2N,k\right)$ polar code for a channel with mutual information $I$ is equivalent to designing a $(N,k_1)$ polar code  for $I^-$ and a $\left(N,k_2\right)$ polar code for $I^+$, where
  \begin{align}
  k_1+k_2 &=k \\
  k_1/k_2 &= I^-/I^+.
  \end{align}
 Let $I_t$ be the design MI for the $t$th transmission and $2N_t$ the code length after the $t$th transmission ($t\geq2$). Note that only $N_t$ bits are sent in the $t$th transmission. Under the constraints
 \begin{equation}
 I_{t}^+ \leq I_{t-1},~t=2,3,\dots
 \end{equation}
the information set in $u(t+1)_{N_t+1}^{2N_t}$ is always a subset of the information set in $u(t)_1^{N_t}$ because of the nesting property, i.e., every polar code after $t$th transmission is an optimal $\left(\sum_{q=1}^{t}N_q,k\right)$ polar code. 

\subsection{Proposed Scheme}
Polar codes with dynamically frozen bits are proposed in \cite{trifonov2016polar} to improve the distance properties of polar codes. The idea is to predetermine some frozen bits as linear combinations of previous information bits. The corresponding \enquote{dynamically frozen constraint} is needed to encode and decode. Our scheme is based on this technique and QUP. There are no constraints regarding the block length of any transmission.

\begin{figure}
	\centering
	\footnotesize
	\begin{tikzpicture}
	\fill[fill=gray!20] (0,0)--(1.5,0)--(0,1.5)--cycle;
	\fill[fill=gray!20] (1.5,0)--(3,0)--(1.5,1.5)--cycle;
	\fill[fill=gray!20] (3,0)--(4.5,0)--(3,1.5)--cycle;
	\fill[fill=gray!20] (4.5,0)--(6,0)--(4.5,1.5)--cycle;
	\fill[fill=gray!20] (0,1.5)--(1.5,1.5)--(0,3)--cycle;
	\fill[fill=gray!20] (3,1.5)--(4.5,1.5)--(3,3)--cycle;
	\fill[fill=gray!20] (0,3)--(1.5,3)--(0,4.5)--cycle;
	\fill[fill=gray!20] (1.5,3)--(3,3)--(1.5,4.5)--cycle;
	\fill[fill=gray!20] (0,0)--(1.5,0)--(0,1.5)--cycle;
	\fill[fill=gray!20] (0,4.5)--(1.5,4.5)--(0,6)--cycle;
	
	\draw [rounded corners,thick] (0,0) rectangle (6,6);
	\node at (5,5.5) {$F^{\otimes \log_2N}$};
	\draw [dashed, rounded corners] (-0.2,-0.1) -- (6.2, -0.1) -- (6.2,1.35)--   (-0.2,1.35)--cycle;
	\draw [decorate,decoration={brace,amplitude=5pt,mirror,raise=4pt},yshift=0pt] (6.2,0) -- (6.2,1.35) node [black,midway,xshift=0.7cm] {$n_1$};
	\draw [dashed, rounded corners] (-0.2,1.35) -- (6.2, 1.35) -- (6.2,2.5)--   (-0.2,2.5)--cycle;
	\draw [decorate,decoration={brace,amplitude=5pt,mirror,raise=4pt},yshift=0pt] (6.2,1.35) -- (6.2,2.5) node [black,midway,xshift=0.7cm] {$n_2$};
	\draw [dashed, rounded corners] (-0.2,2.5) -- (6.2, 2.5) -- (6.2,3.5)--   (-0.2,3.5)--cycle;
	\path (6.2,2.5) -- (6.2,3.5) node [black,midway,xshift=5pt,yshift=2pt] {$\vdots$};
	\draw [dashed, rounded corners] (-0.2,3.5) -- (6.2, 3.5) -- (6.2,5)--   (-0.2,5)--cycle;
	\draw [decorate,decoration={brace,amplitude=5pt,mirror,raise=4pt},yshift=0pt] (6.2,3.5) -- (6.2,5) node [black,midway,xshift=0.7cm] {$n_{t_{\text{max}}}$};
	
	\draw [decorate,decoration={brace,amplitude=5pt,raise=4pt},yshift=0pt] (-0.2,0) -- (-0.2,1.35) node [black,midway,xshift=-0.6cm] {$\mathcal{I}_1$};
	\draw [decorate,decoration={brace,amplitude=5pt,raise=4pt},yshift=0pt] (-0.2,1.35) -- (-0.2,2.5) node [black,midway,xshift=-0.6cm] {$\mathcal{I}_2$};
	\draw [decorate,decoration={brace,amplitude=5pt,raise=4pt},yshift=0pt] (-0.2,3.5) -- (-0.2,5) node [black,midway,xshift=-0.6cm] {$\mathcal{I}_{t_{\text{max}}}$};
	\path (-0.2,2.5) -- (-0.2,3.5) node [black,midway,xshift=-5pt,yshift=2pt] {$\vdots$};
	\end{tikzpicture}
	\caption{Proposed IR-HARQ scheme for polar coding}
	\label{fig:polar-harq}
\end{figure}
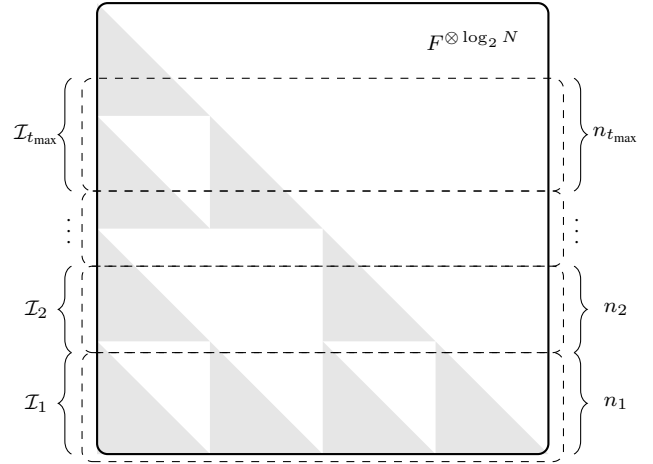

We assume the system is designed for a maximum of $t_{\text{max}}$ transmissions. Let $N$ be the length of a mother polar code
\begin{equation}
N = 2^{ \left \lceil{\log_2\left(\sum_{q=1}^{t_{\text{max}}}n_q\right)}\right \rceil },
\end{equation}
where $\left \lceil{\cdot}\right \rceil $ denotes the ceiling function and $n_q$ denotes the length of the $q$th transmission. In the proposed scheme, after the $t$th transmission, a $\left(\sum_{q=1}^{t}n_q,k,N\right)$ QUP-polar code is decoded. The main structure and design algorithm are displayed in Fig.~\ref{fig:polar-harq} and Algorithm~\ref{alg1} respectively. The set $\mathcal{I}_t$ is 
\begin{equation}
\left\{ N-\sum_{q=1}^{t}n_q+1, \dots, N-\sum_{q=1}^{t-1}n_q \right\},
\end{equation}
and $\mathcal{A}_t$ and $\mathcal{A}^\complement_t$ denote the information and frozen set of the $\left(\sum_{q=1}^{t}n_q,k,N\right)$ QUP-polar code after the $t$th transmission. The output dynamically frozen constraint is used for encoding and decoding. Note that $ \mathcal{A}_t \cup \mathcal{A}^\complement_t = \bigcup_{q=1}^t \mathcal{I}_q$. First ${N-\sum_{q=1}^{t}n_q}$ bits are frozen to zero (Theorem~\ref{th1}), but their indices are neither in $\mathcal{A}_t$ nor $\mathcal{A}^\complement_t$. 
\begin{algorithm}[]
	\footnotesize
	\DontPrintSemicolon
	\SetAlgoLined
	\SetKwInOut{Input}{Input}\SetKwInOut{Output}{Output}
	\Input{message length $k$, mother code length $N$, \\$t$th transmission length $n_t$, design MI $I_t$, \\previous code length $n_t^\prime=\sum_{q=1}^{t-1}n_q$, \\previous frozen set $\mathcal{A}^{\prime \complement}_{t} =\bigcup_{q=1}^{t-1} \mathcal{A}^\complement_q$, \\previous information set $\mathcal{A}^{\prime}_{t} =\bigcup_{q=1}^{t-1} \mathcal{A}_q$}
	\Output{information set $\mathcal{A}_t$, \\frozen set $\mathcal{A}^\complement_t$, \\dynamically frozen constraint}
	\BlankLine
	
Initialize design MI: $I_1^{N-n_t-n_t^\prime}=0, I_{N-n_t-n_t^\prime+1}^N=I_t$.\\
Estimate ${\text{Pe}_i} := {\rm Pr}\left( \hat{U}_i \neq U_i | \hat{U}_1^{i-1} = U_1^{i-1} \right)$.\\
Set $\text{Pe}_{\mathcal{A}^{\prime \complement}_{t}}=1$.
Find $k$ smallest in $\text{Pe}_{N-n_t-n_t^\prime+1}^N$ and put their indices in $\mathcal{A}_t$.\\
Frozen set $\mathcal{A}^\complement_t =  \bigcup_{q=1}^{t} \mathcal{I}_q \setminus \mathcal{A}_{t}$.\\
\If{$t\neq 1$}{
	Dynamically frozen constraint is given by $u_{  \mathcal{A}^{\prime}_t \setminus \mathcal{A}_{t}  } = u_{ \mathcal{A}_t \setminus \mathcal{A}^{\prime}_{t} }$.
}

\caption{Design $t$th transmission}
\label{alg1}
\end{algorithm}

Obviously, for $t=1$, the code is a normal QUP-polar code. Other codes are extended from previous codes with dynamically frozen bits. The bits which are already frozen have to be frozen for all extensions. This scheme is equivalent to the polar code extension method in \cite{ma2017incremental} if $n_1$ is a power of two and $n_t = \sum_{q=1}^{t-1}n_q,~t=2,3,\dots,t_{\text{max}}$.

For example, we consider $k=5$ and $n=\left(7,5\right)$. The information bits are $m_{1,\dots,5}$.
	\begin{itemize}
		\item For the first transmission, $\mathcal{A}_1=\left\{12,13,14,15,16\right\}$ for the $\left(7,5,16\right)$ QUP-polar code. Thus, $u_{\mathcal{A}_1}=m_1^5$ and $u_{\{1,\dots,16\}\setminus{\mathcal{A}_1}}=0$. We encode the vector by $c(1)_1^{16} = u_1^{16} \mathbb{F}^{4}$ and transmit the last 7 bits $c(1)_{10}^{16}$.
		\item For the second transmission, $\mathcal{A}_2=\left\{8,12,14,15,16\right\}$ for the $\left(12,5,16\right)$ QUP-polar code. We precode $u_1^{16}$ with $u_{\mathcal{A}_2 \setminus \mathcal{A}_1} = u_{\mathcal{A}_1 \setminus \mathcal{A}_2}$, which is $u_8=u_{13}=m_2$ in this case. We encode $c(2)_1^{16} = u_1^{16} \mathbb{F}^{4}$ and send $c(2)_5^{9}$. 
		
		Because $\mathbb{F}^{4}$ is a lower triangular matrix, $u_8=u_{13}$ does not change the first transmitted bits, i.e., $c(2)_{10}^{16} = c(1)_{10}^{16}$. At receiver, we decode the $\left(12,5,16\right)$ QUP-polar code from the noisy version of $c(2)_{5}^{16}$. Note that $u_{13}$ is now a dynamically frozen bit with constraint $u_{13}=u_8$.
	\end{itemize}
In this example, the QUP-polar codes with optimal information set are decoded after every single transmission by using all received information.

\subsection{Extension to MLC}
With $2^m$-ASK, $n_t$ symbols ($mn_t$ bits) are transmitted in the $t$th transmission. We replace line~1,2 in Algorithm~\ref{alg1} with an MLPC construction algorithm (MI-DGA in this work).

\section{Design Examples and Simulation Results}\label{sec:sim}
\begin{figure*}
	\centering
	\begin{tikzpicture}[scale=1]
	\footnotesize
	\begin{semilogyaxis}[
	width=6.5in,
	height=2.5in,
	legend style={at={(0,0)},anchor=south west},
	ymin=0.0001,
	ymax=1,
	minor x tick num=4,
	minor y tick num=5,
	major grid style={thick},
	grid=both,
	xmin = -6,
	xmax =3.5,
	xlabel = SNR in dB,
	ylabel = FER,
	mark options={solid},
	]
	\addplot[red, mark = square]
	table[x=snr,y=fer]{snr fer
		-7 0.9009
		-6.75 0.78125
		-6.5 0.68027
		-6.25 0.55866
		-6 0.45662
		-5.75 0.34965
		-5.5 0.17825
		-5.25 0.11628
		-5 0.063331
		-4.75 0.026116
		-4.5 0.010002
		-4.25 0.004745
		-4 0.00093231
	}; 
	
\addplot[blue, mark = square]
table[x=snr,y=fer]{snr fer
	-7 0.97087
	-6.75 0.98039
	-6.5 0.90909
	-6.25 0.81301
	-6 0.81967
	-5.75 0.77519
	-5.5 0.61728
	-5.25 0.48544
	-5 0.32787
	-4.75 0.25974
	-4.5 0.13004
	-4.25 0.070323
	-4 0.036563
	-3.75 0.012355
	-3.5 0.0045637
	-3.25 0.0014041
	-3 0.0004238
};

\addplot[orange, mark = square]
table[x=snr,y=fer]{snr fer
	-5 0.93458
	-4.75 0.91743
	-4.5 0.91743
	-4.25 0.79365
	-4 0.68966
	-3.75 0.55556
	-3.5 0.49261
	-3.25 0.25
	-3 0.1675
	-2.75 0.097752
	-2.5 0.046773
	-2.25 0.021935
	-2 0.0080457
	-1.75 0.0024843
	-1.5 0.00067897
};

\addplot[black, mark = square]
table[x=snr,y=fer]{snr fer
	-1 0.9434
	-0.75 0.93458
	-0.5 0.82645
	-0.25 0.62893
	0 0.4902
	0.25 0.4717
	0.5 0.3012
	0.75 0.19608
	1 0.1227
	1.25 0.055586
	1.5 0.026788
	1.75 0.0088582
	2 0.0032242
	2.25 0.0010545
	2.5 0.000194
};
	
\addplot[red, mark = x]
table[x=snr,y=fer]{snr fer
	-6 0.96154
	-5.75 0.89286
	-5.5 0.81301
	-5.25 0.68966
	-5 0.5814
	-4.75 0.38168
	-4.5 0.22523
	-4.25 0.14006
	-4 0.066225
	-3.75 0.030211
	-3.5 0.011743
	-3.25 0.0041748
	-3 0.0012172
	-2.75 0.00042002
}; 

\addplot[blue, mark = x]
table[x=snr,y=fer]{snr fer
	-6 0.9901
	-5.75 1
	-5.5 0.98039
	-5.25 0.97087
	-5 0.95238
	-4.75 0.83333
	-4.5 0.74074
	-4.25 0.50505
	-4 0.38462
	-3.75 0.26247
	-3.5 0.14306
	-3.25 0.08726
	-3 0.037707
	-2.75 0.017464
	-2.5 0.0052414
	-2.25 0.001637
	-2 0.0005465
};

\addplot[orange, mark = x]
table[x=snr,y=fer]{snr fer
	-4 0.97087
	-3.75 0.9434
	-3.5 0.9009
	-3.25 0.84034
	-3 0.70423
	-2.75 0.50505
	-2.5 0.44643
	-2.25 0.31546
	-2 0.16584
	-1.75 0.1001
	-1.5 0.044723
	-1.25 0.017319
	-1 0.0070681
	-0.75 0.0029273
	-0.5 0.00111
	-0.25 0.00037029
};

\addplot[black, mark =x]
table[x=snr,y=fer]{snr fer
	0 0.90909
	0.25 0.82645
	0.5 0.81301
	0.75 0.61728
	1 0.51546
	1.25 0.33784
	1.5 0.19841
	1.75 0.12547
	2 0.066845
	2.25 0.028885
	2.5 0.01459
	2.75 0.0056718
	3 0.0017645
	3.25 0.00057772
};
	
	\addplot[red,  dashed, mark=+]
	table[x=snr,y=fer]{snr fer
		-7 0.84685
		-6.75 0.875
		-6.5 0.68707
		-6.25 0.50838
		-6 0.43379
		-5.75 0.31119
		-5.5 0.17647
		-5.25 0.10581
		-5 0.051298
		-4.75 0.022983
		-4.5 0.0085017
		-4.25 0.0029419
		-4 0.00083908
	}; 
	
	\addplot[blue,  dashed, mark=+]
	table[x=snr,y=fer]{snr fer
		-7 0.97087
		-6.75 0.98039
		-6.5 0.90909
		-6.25 0.81301
		-6 0.81967
		-5.75 0.77519
		-5.5 0.61728
		-5.25 0.48544
		-5 0.32787
		-4.75 0.25974
		-4.5 0.13004
		-4.25 0.070323
		-4 0.036563
		-3.75 0.012355
		-3.5 0.0045637
		-3.25 0.0014041
		-3 0.0004238
	};
	
	\addplot[orange,  dashed, mark=+]
	table[x=snr,y=fer]{snr fer
		-5 0.93458
		-4.75 0.91743
		-4.5 0.91743
		-4.25 0.79365
		-4 0.68966
		-3.75 0.55556
		-3.5 0.49261
		-3.25 0.25
		-3 0.1675
		-2.75 0.097752
		-2.5 0.046773
		-2.25 0.021935
		-2 0.0080457
		-1.75 0.0024843
		-1.5 0.00067897
	};
	
	\addplot[black, dashed, mark=+]
	table[x=snr,y=fer]{snr fer
		-1 0.9434
		-0.75 0.93458
		-0.5 0.82645
		-0.25 0.62893
		0 0.4902
		0.25 0.4717
		0.5 0.3012
		0.75 0.19608
		1 0.1227
		1.25 0.055586
		1.5 0.026788
		1.75 0.0088582
		2 0.0032242
		2.25 0.0010545
		2.5 0.000194
	};
	
		\addplot[red,  mark=*]
		table[x=snr,y=fer]{snr fer
			-8 1
			-7.75 1
			-7.5 1
			-7.25 1
			-7 1
			-6.75 1
			-6.5 0.96154
			-6.25 0.80645
			-6 0.81967
			-5.75 0.7814
			-5.5 0.69444
			-5.25 0.45045
			-5 0.2809
			-4.75 0.22624
			-4.5 0.089445
			-4.25 0.038555
			-4 0.019275
			-3.75 0.0050551
			-3.5 0.0017155
			-3.25 0.00049075
		}; 
		
		\addplot[blue, mark=*]
		table[x=snr,y=fer]{snr fer
			-7 1
			-6.75 1
			-6.5 1
			-6.25 0.98039
			-6 0.98039
			-5.75 0.98039
			-5.5 1
			-5.25 0.86207
			-5 0.76923
			-4.75 0.68493
			-4.5 0.50505
			-4.25 0.34247
			-4 0.22523
			-3.75 0.12285
			-3.5 0.063291
			-3.25 0.027307
			-3 0.010356
			-2.75 0.0039355
			-2.5 0.0011376
			-2.25 0.00028464
		};
		
		\addplot[orange , mark=*]
		table[x=snr,y=fer]{snr fer
			-4 0.96154
			-3.75 0.92593
			-3.5 0.71429
			-3.25 0.69444
			-3 0.56818
			-2.75 0.46296
			-2.5 0.27933
			-2.25 0.19685
			-2 0.091241
			-1.75 0.035063
			-1.5 0.022883
			-1.25 0.0068409
			-1 0.0025272
			-0.75 0.00091795
		};
		
		\addplot[black,  mark=*]
		table[x=snr,y=fer]{snr fer
			-1 1
			-0.75 1
			-0.5 1
			-0.25 0.98039
			0 0.96154
			0.25 0.84746
			0.5 0.80645
			0.75 0.625
			1 0.46765
			1.25 0.35971
			1.5 0.2439
			1.75 0.12755
			2 0.07764
			2.25 0.044723
			2.5 0.014573
			2.75 0.007677
			3 0.0023702
			3.25 0.0007919
		};
	
	\end{semilogyaxis}
	\end{tikzpicture}
	\caption{biAWGN, $k=128$, $n_1^4=\left(250,250,200,140\right)$, design SNR $=\left(3, -1, -2.5, -3\right)$dB}
	\label{fig:ex1}
\end{figure*}
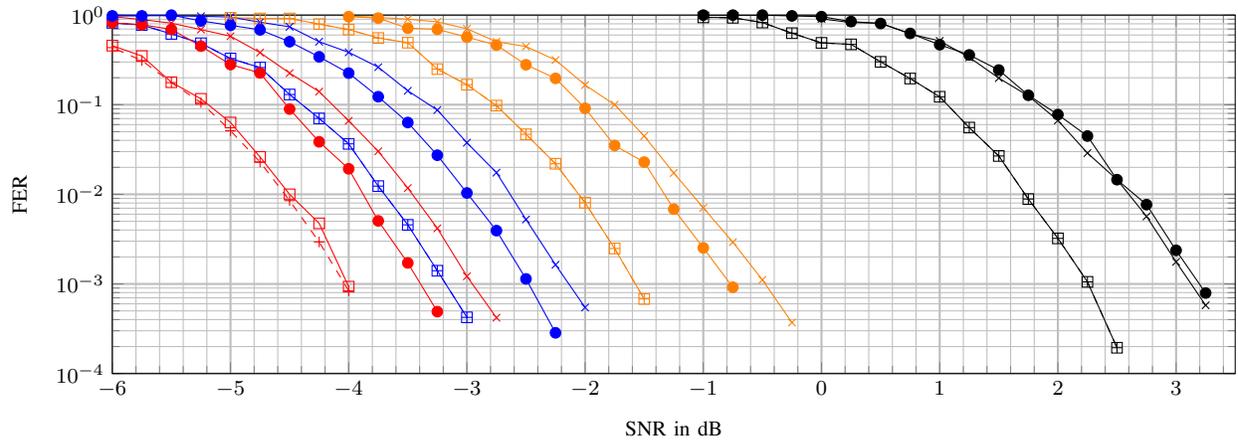

\begin{figure*}
	\centering
	\begin{tikzpicture}[scale=1]
	\footnotesize
	\begin{semilogyaxis}[
	width=6.5in,
	height=2.5in,
	legend style={at={(0,0)},anchor=south west},
	ymin=0.0001,
	ymax=1,
	minor x tick num=4,
	minor y tick num=5,
	major grid style={thick},
	grid=both,
	xmin = -5,
	xmax =7,
	xlabel = SNR in dB,
	ylabel = FER,
	mark options={solid},
	]
	\addplot[red, mark = square]
	table[x=snr,y=fer]{snr fer
		-5 0.9434
		-4.75 0.84746
		-4.5 0.57471
		-4.25 0.29412
		-4 0.12821
		-3.75 0.020747
		-3.5 0.0035656
		-3.25 0.00032664
	}; 
	
	\addplot[blue, mark = square]
	table[x=snr,y=fer]{snr fer
		-4 0.96154
		-3.75 0.86207
		-3.5 0.55556
		-3.25 0.29586
		-3 0.097466
		-2.75 0.018477
		-2.5 0.0028014
		-2.25 0.00031058
	};
	
	\addplot[orange, mark = square]
	table[x=snr,y=fer]{snr fer
		-1 0.93458
		-0.75 0.83333
		-0.5 0.54348
		-0.25 0.2457
		0 0.085106
		0.25 0.017464
		0.5 0.0021541
		0.75 0.0002194
	};
	
	\addplot[black, mark = square]
	table[x=snr,y=fer]{snr fer
		4 0.9901
		4.25 0.98039
		4.5 0.82645
		4.75 0.71429
		5 0.42553
		5.25 0.18519
		5.5 0.048948
		5.75 0.010033
		6 0.0017573
		6.25 0.000158
	};
	
	\addplot[red, mark = x]
	table[x=snr,y=fer]{snr fer
		-5 1
		-4.75 1
		-4.5 1
		-4.25 0.96154
		-4 0.66667
		-3.75 0.36232
		-3.5 0.093284
		-3.25 0.014401
		-3 0.0018417
		-2.75 0.00019284
	}; 
	
	\addplot[blue, mark = x]
	table[x=snr,y=fer]{snr fer
		-4 1
		-3.75 1
		-3.5 0.95238
		-3.25 0.88496
		-3 0.60976
		-2.75 0.30303
		-2.5 0.062696
		-2.25 0.012644
		-2 0.0018859
		-1.75 0.0002367
	};
	
	\addplot[orange, mark = x]
	table[x=snr,y=fer]{snr fer
		-1 1
		-0.75 0.95238
		-0.5 0.83333
		-0.25 0.55249
		0 0.17699
		0.25 0.035063
		0.5 0.0057376
		0.75 0.00045526
	};
	
	\addplot[black, mark =x]
	table[x=snr,y=fer]{snr fer
		4 1
		4.25 1
		4.5 0.9901
		4.75 0.95238
		5 0.80645
		5.25 0.49505
		5.5 0.21786
		5.75 0.07758
		6 0.020084
		6.25 0.005161
		6.5 0.0010534
		6.75 0.00028896
	};
	
	\addplot[red,  dashed, mark=+]
	table[x=snr,y=fer]{snr fer
		-5 0.9434
		-4.75 0.88136
		-4.5 0.62069
		-4.25 0.27647
		-4 0.079487
		-3.75 0.019087
		-3.5 0.0027098
		-3.25 0.00024825
	}; 
	
	\addplot[blue,  dashed, mark=+]
	table[x=snr,y=fer]{snr fer
		-4 0.96154
		-3.75 0.86207
		-3.5 0.55556
		-3.25 0.29586
		-3 0.097466
		-2.75 0.018477
		-2.5 0.0028014
		-2.25 0.00031058
		
	};
	
	\addplot[orange,  dashed, mark=+]
	table[x=snr,y=fer]{snr fer
		-1 0.93458
		-0.75 0.83333
		-0.5 0.54348
		-0.25 0.2457
		0 0.085106
		0.25 0.017464
		0.5 0.0021541
		0.75 0.0002194
	};
	
	\addplot[black, dashed, mark=+]
	table[x=snr,y=fer]{snr fer
		4 0.9901
		4.25 0.98039
		4.5 0.82645
		4.75 0.71429
		5 0.42553
		5.25 0.18519
		5.5 0.048948
		5.75 0.010033
		6 0.0017573
		6.25 0.000158
	};
	
	\addplot[red,  mark=*]
	table[x=snr,y=fer]{snr fer
		-5 0.98039
		-4.75 0.89286
		-4.5 0.50505
		-4.25 0.20161
		-4 0.039777
		-3.75 0.0044912
		-3.5 0.00023132
	}; 
	
	\addplot[blue, mark=*]
	table[x=snr,y=fer]{snr fer
		-4 0.98039
		-3.75 0.87719
		-3.5 0.65789
		-3.25 0.27322
		-3 0.068027
		-2.75 0.0094038
		-2.5 0.00059627
	};
	
	\addplot[orange , mark=*]
	table[x=snr,y=fer]{snr fer
		-1 0.96154
		-0.75 0.92593
		-0.5 0.80645
		-0.25 0.33784
		0 0.11364
		0.25 0.01547
		0.5 0.0018484
		0.75 0.00014
	};
	
	\addplot[black,  mark=*]
	table[x=snr,y=fer]{snr fer
		4 1
		4.25 1
		4.5 1
		4.75 0.96154
		5 0.86207
		5.25 0.51546
		5.5 0.28571
		5.75 0.091075
		6 0.030921
		6.25 0.0069367
		6.5 0.0017316
		6.75 0.00055224
	};
	
	\end{semilogyaxis}
	\end{tikzpicture}
	\caption{biAWGN, $k=848$, $n_1^4=\left(1000,1000,1500,800\right)$, design SNR $=\left(6.5, 1, -2, -3\right)$dB}
	\label{fig:ex2}
\end{figure*}
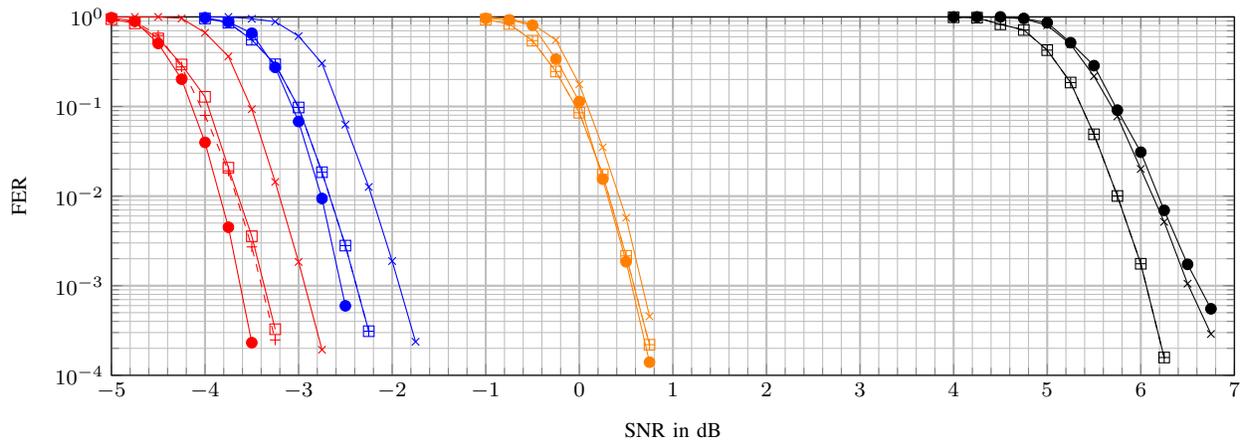

\begin{figure*}
	\centering
	\footnotesize
	\begin{tikzpicture}[scale=1]
	\begin{semilogyaxis}[
	width=6.5in,
	height=2.5in,
	legend style={at={(0,0)},anchor=south west},
	ymin=0.0001,
	ymax=1,
	minor x tick num=4,
	minor y tick num=5,
	major grid style={thick},
	grid=both,
	xmin = 2,
	xmax =17,
	xlabel = SNR in dB,
	ylabel = FER,
	mark options={solid},
	]
	\addplot[red, mark = square]
	table[x=snr,y=fer]{snr fer
		2 1
		2.25 0.98039
		2.5 0.98039
		2.75 0.9434
		3 0.64935
		3.25 0.36496
		3.5 0.12531
		3.75 0.04686
		4 0.0089993
		4.25 0.00099737
	};\label{scl4} 
	
	\addplot[blue, mark = square]
	table[x=snr,y=fer]{snr fer
		4 1
		4.25 0.96154
		4.5 0.92593
		4.75 0.64935
		5 0.45045
		5.25 0.15773
		5.5 0.048876
		5.75 0.010462
		6 0.0010491
		6.25 0.0001262
	};\label{scl3}
	
	\addplot[orange, mark = square]
	table[x=snr,y=fer]{snr fer
		8.5 1
		8.75 0.9434
		9 0.86207
		9.25 0.68493
		9.5 0.43103
		9.75 0.18657
		10 0.071124
		10.25 0.018162
		10.5 0.0022694
		10.75 0.00027466
	};\label{scl2}
	
	\addplot[black, mark = square]
	table[x=snr,y=fer]{snr fer
		13.5 1
		13.75 0.9434
		14 0.92593
		14.25 0.64935
		14.5 0.42017
		14.75 0.22422
		15 0.07278
		15.25 0.022563
		15.5 0.0040199
		15.75 0.0005
	};\label{scl1}
	
	\addplot[red, mark = x]
	table[x=snr,y=fer]{snr fer
		4.75 1
		5 0.89286
		5.25 0.73529
		5.5 0.43478
		5.75 0.17606
		6 0.056948
		6.25 0.011696
		6.5 0.0024394
		6.75 0.00035196
	};\label{turbo4}
	
	\addplot[blue, mark = x]
	table[x=snr,y=fer]{snr fer
		5 1
		5.25 1
		5.5 0.96154
		5.75 0.98039
		6 0.73529
		6.25 0.55556
		6.5 0.25
		6.75 0.072993
		7 0.017403
		7.25 0.0027276
		7.5 0.00030549
	};\label{turbo3}
	
	\addplot[orange, mark = x]
	table[x=snr,y=fer]{snr fer
		9.25 1
		9.5 0.98039
		9.75 0.89286
		10 0.71429
		10.25 0.5102
		10.5 0.2381
		10.75 0.067385
		11 0.015728
		11.25 0.0037075
		11.5 0.00052576
	};\label{turbo2}
	
	\addplot[black, mark =x]
	table[x=snr,y=fer]{snr fer
		14.25 1
		14.5 0.98039
		14.75 0.81967
		15 0.65789
		15.25 0.36496
		15.5 0.18939
		15.75 0.085616
		16 0.029002
		16.25 0.006
		16.5 0.001
	};\label{turbo1}
	
	\addplot[red,  dashed, mark=+]
	table[x=snr,y=fer]{snr fer
		2.5 1
		2.75 0.83019
		3 0.49351
		3.25 0.28467
		3.5 0.11028
		3.75 0.034677
		4 0.0039597
		4.25 0.00083779
		
	};\label{sc4} 
	
	\addplot[blue,  dashed, mark=+]
	table[x=snr,y=fer]{snr fer
		4 1
		4.25 0.96154
		4.5 0.88889
		4.75 0.62338
		5 0.34234
		5.25 0.16088
		5.5 0.033236
		5.75 0.0064867
		6 0.00077635
		6.25 7.0673e-05
	};\label{sc3}
	
	\addplot[orange,  dashed, mark=+]
	table[x=snr,y=fer]{snr fer
		8.5 1
		8.75 0.9434
		9 0.86207
		9.25 0.68493
		9.5 0.43103
		9.75 0.18657
		10 0.071124
		10.25 0.018162
		10.5 0.0022694
		10.75 0.00027466
	};\label{sc2}
	
	\addplot[black, dashed, mark=+]
	table[x=snr,y=fer]{snr fer
		13.5 1
		13.75 0.9434
		14 0.92593
		14.25 0.64935
		14.5 0.42017
		14.75 0.22422
		15 0.07278
		15.25 0.022563
		15.5 0.0040199
		15.75 0.0005
	};\label{sc1}
	
	\addplot[red,  mark=*]
	table[x=snr,y=fer]{snr fer
		2.75 1
		3 0.98039
		3.25 0.9434
		3.5 0.9434
		3.75 0.66667
		4 0.3937
		4.25 0.20243
		4.5 0.049407
		4.75 0.0070902
		5 0.0014867
		5.25 0.00015204
	};\label{ldpc4} 
	
	\addplot[blue, mark=*]
	table[x=snr,y=fer]{snr fer
		5 1
		5.25 0.9434
		5.5 0.9434
		5.75 0.70423
		6 0.4902
		6.25 0.16611
		6.5 0.060386
		6.75 0.010811
		7 0.003175
		7.25 0.00041528
	};\label{ldpc3}
	
	\addplot[orange , mark=*]
	table[x=snr,y=fer]{snr fer
		9.5 1
		9.75 0.92593
		10 0.83333
		10.25 0.55556
		10.5 0.24876
		10.75 0.097847
		11 0.030998
		11.25 0.0069004
		11.5 0.0011555
		11.75 0.00017056
	};\label{ldpc2}
	
	\addplot[black,  mark=*]
	table[x=snr,y=fer]{snr fer
		14 0.98039
		14.25 0.98039
		14.5 0.81967
		14.75 0.64103
		15 0.36496
		15.25 0.16722
		15.5 0.073099
		15.75 0.015913
		16 0.0038084
		16.25 0.0010253
		16.5 0.00022601
	};\label{ldpc1}
	
	\end{semilogyaxis}
	\end{tikzpicture}
	\caption{8-ASK, $k=896$, $n_1^4=\left(1200,600,1200,900\right)$, design SNR $=\left(16.25, 11.25, 6.75, 5\right)$dB}
	\label{fig:ex3}
\end{figure*}
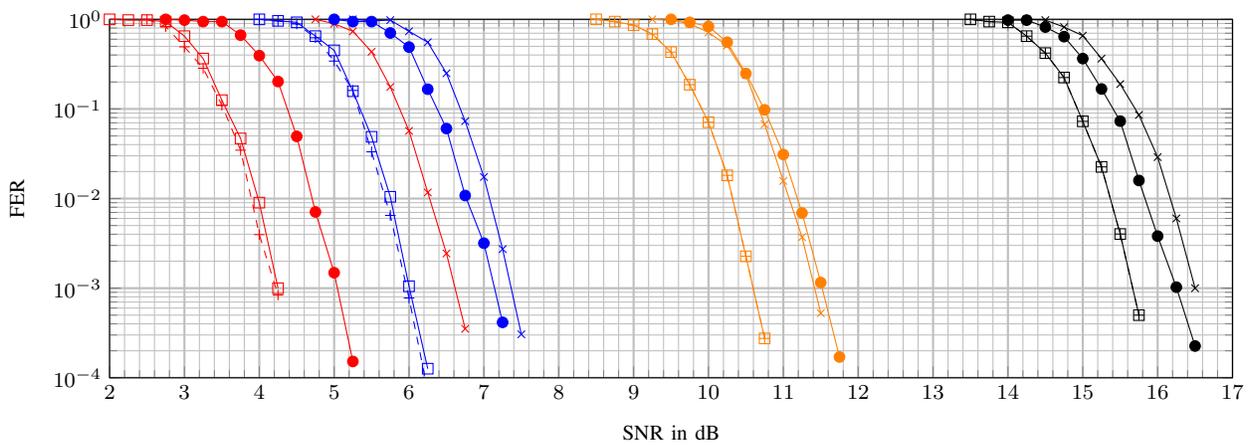

\begin{table}
	\footnotesize
	\caption{Explanation for the curves in Fig.~\ref{fig:ex1}, Fig.~\ref{fig:ex2}, Fig.~\ref{fig:ex3}. }
	\label{tab:label}
	\centering
	{\renewcommand{\arraystretch}{1.5}
		\begin{tabular}{|c||c|c|c|c|}				
			\hline
			& direct polar& proposed polar & 5G-LDPC& LTE-turbo\\
			\hline
			$1$st & \ref{sc1}&  \ref{scl1} &  \ref{ldpc1} &  \ref{turbo1}\\
			\hline
			$2$nd & \ref{sc2}&  \ref{scl2} &  \ref{ldpc2} &  \ref{turbo2}\\
			\hline
			$3$rd & \ref{sc3}&  \ref{scl3} &  \ref{ldpc3} &  \ref{turbo3}\\
			\hline
			$4$th & \ref{sc4}&  \ref{scl4} &  \ref{ldpc4} &  \ref{turbo4}\\
			\hline
		\end{tabular}
	}
\end{table}	
In this section, three design examples for $t_{\text{max}}=4$ are shown in Fig.~\ref{fig:ex1}, Fig.~\ref{fig:ex2}, Fig.~\ref{fig:ex3}. We use 16 bits CRC with generator polynomial \enquote{$x^{16}+x^{12}+x^5+1$} for error detection. The polar codes are decoded by min-sum approximated SCL decoding with list size 32. Log-MAP decoding with 10 iterations and belief propagation (BP) with 50 iterations are used for LTE-turbo and 5G-LDPC codes, respectively. In the 8-ASK example, Bit-Interleaved Coded Modulation (BICM) \cite{BICM} is used for turbo and LDPC codes. Note that the directly designed polar codes (dashed curves) are $\left(\sum_{q=1}^{t}n_q,k,\cdot\right)$ QUP-polar codes only serve as a reference and can not work for an IR-HARQ scheme. 

The simulation results show that the polar codes generated by the proposed algorithm perform very similar to directly designed polar codes. In the 8-ASK example, the proposed scheme performs approximately $1$dB better than 5G-LDPC codes after two to four transmissions. 

Because of the extension constraint, it is hard to extend a heavily punctured polar codes. Consider an $\left(N+w,k,2N\right)$ QUP-polar code, where $w$ is a positive integer and $w\ll N$. Normally, the first $w$ bits are unreliable, while these bits are almost perfect in the extended $(2N,k)$ code. This effect degrades the performance for all further extensions. For example, for $n_1^4=\left(1000,100,100,100\right)$, the $3$rd and $4$th polar codes perform much worse than directly designed code because $\sum_{q=1}^{2}n_q=1100$ and $1100-1024=76\ll1024$. Therefore, we should avoid using heavily punctured polar code for the $\left\{1,\dots,t_\text{max}-1\right\}$th transmissions in biAWGN channel. However, this effect disappears for MLPC. We can design very good codes for $n_1^4/m=\left(1000,100,100,100\right)$. The reason should be the automatically controlled code rate for $m$ polar codes.

\section{Conclusion}\label{sec:con}
In this paper, an IR-HARQ scheme based on QUP and dynamically frozen bits for biAWGN channel and PCM is proposed. Simulation results show that the rate-matched polar codes generated by the proposed algorithm perform very similar to directly designed QUP-polar codes.
 
For future work, this scheme can be applied for a fading channel, i.e. the channel information estimated by training symbols of the previous transmissions could be used to design polar codes for next transmission.

\bibliographystyle{IEEEtran}
\bibliography{IEEEabrv,references}
\end{document}